\newcolumntype{C}{>{\centering\arraybackslash}p{0.26cm}}
\newcolumntype{D}{>{\centering\arraybackslash}p{0.42cm}}
\newcommand{\Q}{\mathcal Q}
\newcommand{\G}{\mathcal G}
\newcommand{\K}{\mathcal K}
\newcommand{\R}{\mathbb R}
\newtheorem{thm}{Theorem}
\newtheorem{prop}[thm]{Proposition}
\theoremstyle{remark}
\newtheorem{remark}{Remark}
\begin{document}
\title{Learning over All Stabilizing Nonlinear Controllers for a Partially-Observed Linear System}
\author{Ruigang Wang, Nicholas H. Barbara, Max Revay, and Ian R. Manchester
\thanks{This work was supported in part by the Australian Research Council. }
\thanks{The authors are with the Sydney Institute for Robotics and Intelligent Systems, Australian Centre for Field Robotics, and also with the School of Aerospace, Mechanical and Mechatronic Engineering, The University of Sydney, Sydney, NSW 2006, Australia (e-mail: ian.manchester@sydney.edu.au).}}

\maketitle
\thispagestyle{empty} 

\begin{abstract}
This paper proposes a nonlinear policy architecture for control of partially-observed linear dynamical systems providing built-in closed-loop stability guarantees. The policy is based on a nonlinear version of the Youla parameterization, and augments a known stabilizing linear controller with a nonlinear operator from a recently developed class of dynamic neural network models called the recurrent equilibrium network (REN). {We prove that RENs are universal approximators of contracting and Lipschitz nonlinear systems, and subsequently show that the the proposed Youla-REN architecture is a universal approximator of  stabilizing nonlinear controllers}. The REN architecture simplifies learning since unconstrained optimization can be applied, and we consider both a model-based case where exact gradients are available and reinforcement learning using random search with zeroth-order oracles. In simulation examples our method converges faster to better controllers and is more scalable than existing methods, while guaranteeing stability during learning transients.
\end{abstract}

\begin{IEEEkeywords}
Contraction, learning based control, Youla parameterization, nonlinear output feedback
\end{IEEEkeywords}

\section{Introduction}
\label{sec:introduction}
\IEEEPARstart{D}{eep} neural networks and reinforcement learning (RL) hold tremendous potential for continuous control applications and impressive results
have already been demonstrated  on robot locomotion, manipulation, and other benchmark control tasks  \cite{lillicrap2016continuous}. Nevertheless, deep RL based controllers are not yet widely used in engineering practice, as they can exhibit unexpected or brittle behaviour. This has led to a rapid growth in literature studying computational verification of neural networks and reinforcement learning with stability guarantees.

Research in stable deep RL has largely assumed full state information is available \cite{berkenkamp2017safe,chang2019neural,khader2021learning}.
In practice, most controllers only have access to partial state observations. In this setting, the optimal controller is generally a dynamic function of previous observations. To the authors' knowledge, the only existing methods for this setting are
\cite{knight2011stable,gu2021recurrent}. These methods propose nonconvex sets of stabilizing recurrent neural network (RNN) controllers that require nontrivial projection methods at each training step. 

\subsection{Background}

\paragraph{The Youla Parameterization}
Developed in the late 70's, the Youla parameterization (or the Youla-Kucera parameterization) provides a parameterization of all stabilizing controllers for a given LTI system via a so-called \textit{$\mathcal Q$ parameter} \cite{youla1976modernI}, not to be confused with the state-action value function in RL, which is also usually denoted by $Q$. The key feature of the Youla parameterization is that all stabilizing controllers can be represented via a stable system $\mathcal Q$, and the closed-loop response is linear in $\mathcal Q$. It plays a central role in linear robust control theory \cite{zhouRobustOptimalControl1996}, controller optimization \cite{boyd1991linear}, and decentralized control \cite{rotkowitz2005characterization}. It has been used to guarantee stability in the linear reinforcement learning setting \cite{roberts2011feedback}, to give the tightest-known regret bounds for online learning of linear controllers \cite{simchowitz2020improper}, and has been used in many applications, see e.g. \cite{mahtout2020advances} for a recent review.

To date, these applications have employed a linear $\mathcal Q$ parameter despite several works extending the theory to a nonlinear setting, see e.g. \cite{fujimotoCharacterizationAllNonlinear2000}. The main barriers to the widespread application of nonlinear Youla parameterizations have been (i) non-constructive parameterizations expressed in terms of coprime factors or kernel representations which are difficult to compute, and (ii) the lack of flexible parameterizations of nonlinear $\mathcal Q$ parameters.

\paragraph{Learning Stable Dynamical Models}
The construction of a nonlinear $\mathcal Q$ parameter requires a parameterization of nonlinear systems with stability guarantees. A simple approach is to bound the maximum singular value of an RNN's weight matrix leading to simple, yet conservative stability criteria \cite{jaeger2002adaptive,miller2018stable}. There have been a number of improvements that reduce conservatism by using non-Euclidean metrics \cite{revay2020contracting}, incremental integral quadratic constraints (IQCs) \cite{revay2020convex}, and local IQCs \cite{yin2021stability}. A more general parameterization allowing state-dependent contraction metrics was given in \cite{tobenkin2017convex}. A fundamental problem with these approaches is that the stability constraints take the form of (possibly nonconvex) matrix inequalities which require complex projection, barrier, or penalty methods to enforce. A significant development addressing these issues was the \textit{direct parameterization} developed in \cite{revay2021recurrent}, providing an \textit{unconstrained} model parameterization with stability and robustness guarantees.

\paragraph{Reinforcement Learning and Random Search}
RL is the problem of learning an approximate optimal controller given only observations or simulations of the process, not explicit models. There has been an enormous number of new algorithms for RL proposed in recent years, however the vast majority  assume that the full state is available to the control policy during learning and execution \cite{sutton2018reinforcement}. Many methods work by fitting approximate value functions over the state space, while standard policy gradient algorithms rely on a Markovian assumption \cite{sutton2018reinforcement}.  It has recently been demonstrated that in many benchmark control problems, simple random search over policy space is highly competitive \cite{Mania2018advances}. The algorithm is a  modified version of the classic stochastic approximation method of \cite{robbins1951stochastic}.  Random search takes random jumps in parameter space and  approximates directional derivatives from cost evaluations. It  makes no assumptions on the controller structure or state information, and is therefore well-suited to training dynamic controllers on partially observed systems.

\subsection{Contributions}
This paper proposes a method to learn over all stabilizing nonlinear dynamic controllers for a partially-observed linear system  via unconstrained optimization. We build on our recent work \cite{revay2021recurrent, wang2021yoularen,revay2021recurrent-full} proposing the recurrent equilibrium network (REN) model structure, and incorporating it into control systems via the Youla parameterization. The paper \cite{gu2021recurrent} considers a similar problem statement, but requires a solving a semidefinite program (SDP) at each gradient step to project onto a convex approximation of a non-convex set. This paper provides the following contributions relative to prior work:
\begin{enumerate}
    \item We give necessary and sufficient conditions for closed-loop stability, improving on the sufficient conditions in \cite{revay2021recurrent-full, wang2021yoularen, gu2021recurrent}.
    \item {Whereas \cite{revay2021recurrent, revay2021recurrent-full} give a construction of contracting and Lipschitz RENs, in this paper we prove that the REN architecture is a universal approximator of contracting and Lipschitz  nonlinear systems.}
    \item We learn all parameters of the REN model, not just the output layer as in \cite{revay2021recurrent-full}.
    \item We consider the output-feedback case, not the state-feedback case as in \cite{wang2021yoularen, revay2021recurrent-full}.
    \item Our method uses unconstrained optimization, while \cite{gu2021recurrent} requires solving an SDP at each gradient step.
    \item We show that our method is naturally suited to gradient-free optimization  via random search, since it can guarantee stability during the search procedure.
\end{enumerate}

\section{Problem Setup}

We consider a discrete-time linear time-invariant system:
\begin{align}
  x_+ = Ax+Bu+d_x, \quad
  y = Cx+d_y
\end{align}
with internal state $x$, controlled inputs $u$, measured outputs $y$, and disturbances/noise on the state and measurements $d_x, d_y$, respectively. For brevity $x=x_t, x_+=x_{t+1}$, and similarly for other variables. Define 
\[d:=\begin{bmatrix}d_x\\d_y\end{bmatrix}, \quad z:=\begin{bmatrix}x\\u\end{bmatrix}.\]
We assume that the system is stabilizable and detectable, and that the disturbance signal $d$ and initial state $x_0$ are random variables for which a sampler is available for training.

We consider the problem of finding a controller  $u=\K_{\theta}(y)$ with $\theta$ as the learnable parameter, which may be nonlinear and/or dynamic (i.e. have internal states), such that:
\begin{enumerate}
    \item The closed-loop system is contracting, i.e. initial conditions are forgotten exponentially \cite{lohmiller1998contraction}. 
    
    \item The closed-loop response to disturbances is Lipschitz (i.e. has bounded incremental $\ell^2$ gain):
    \[\|z^a-z^b\|\le \gamma \|d^a-d^b\|\]
    for some $\gamma>0$ where $\|\cdot\|$ is the signal $\ell^2$ norm $\|z\| = \sqrt{\sum_{t=0}^\infty |z_t|^2}$, $d^a, d^b$ are two realizations of the disturbance, and $z^a, z^b$ are the corresponding realizations of $z$.
    \item A cost function of the following form is minimized (at least approximately and locally)
\begin{equation} \label{eqn:expected_cost}
    J_\theta = E\left[\sum_{t=0}^{T-1}g(x_t, u_t) + g_T(x_T)\right],
\end{equation}
where $E[\cdot]$ is expectation over  $x_0$ and $d$. 
\end{enumerate}
Note that the first two requirements are \textit{hard} requirements that must be satisfied, whereas the third is \textit{soft} in the sense that we do not expect a global minimum to necessarily be found, since we do not make any assumptions about convexity of $g, g_T$, nor the ability to precisely compute the expectation. In this work, the expectation is  approximated by averaging over a finite batch of sampled $x_0$ and $d$, denoted by  $\hat J_\theta$.

We will consider two scenarios that differ in terms of the information available to the learning algorithm:
\begin{enumerate}
    \item A first-order oracle is available, i.e. where $\hat J_\theta$ and $ \nabla_\theta \hat{J}_\theta$ can be evaluated at any $\theta$. This is implementable when controllers are optimized using a differentiable simulation model, e.g. \cite{howell2022dojo}, hence we refer to this as the ``model-based case''.  
    \item A zeroth-order oracle is available, i.e. only $\hat{J}_\theta$ can be evaluated at each $\theta$. This is the setting usually assumed in reinforcement learning, since in principle it is implementable in experiments. In this case, we will approximate gradients using the method of \cite{Mania2018advances}.
\end{enumerate}

\section{The Youla-REN Controller Architecture}

We will construct a parameterization of all possible nonlinear controllers such that the closed-loop system is contracting and the mapping $d\mapsto z$ is Lipschitz in terms of a ``parameter'' which is itself a contracting and Lipschitz nonlinear system. We then use the REN model introduced in \cite{revay2021recurrent, revay2021recurrent-full} as this parameter.

Since the system is stabilizable and detectable, one can compute gain matrices $K,L$ such that $(A-BK)$ and $(A-LC)$ are stable  (e.g. LQG \cite{zhouRobustOptimalControl1996}). We assume some such gains are known and define the ``base''  linear controller 
\begin{align}
    \hat x_+ = A\hat x+Bu_K+L\tilde y, \quad u_K  = -K\hat x,\label{eq:base}
\end{align}
where $\tilde y = y-C\hat x$. The proposed control architecture is an extension of the classical Youla parameterization for linear systems, and works by augmenting the base controller:
\begin{align}
    \hat x_+ = A\hat x+Bu+L\tilde y, \quad u  = -K\hat x+\Q(\tilde y),\label{eq:qparam}
\end{align}
where $\Q$ is an arbitrary contracting and Lipschitz nonlinear system. Since $\tilde y$ represents the difference between expected and observed outputs, the Youla parameterization could be interpreted as prescribing a \textit{stable response to surprises}.

\begin{remark}
The policy \eqref{eq:qparam} can be extended to incorporate exogenous signals $r$ (e.g. reference signals, feedforward commands, disturbance previews, parameter estimates) by using a controller the form $u=-K\hat x+\Q(\tilde y, r)$, where $\Q$ is contracting and Lipschitz in both inputs. All the following theoretical results apply unchanged in this case, but in this paper we restrict to the pure-feedback case for brevity of notation.
\end{remark}
 
\subsection{Theoretical Results}
We will show that our proposed parameterization is in a sense universal, i.e., any stabilizing dynamic output-feedback controller can be parameterized via $\Q$. 

Consider an arbitrary feedback controller $u= \K(y)$ admitting a state-space realisation
\begin{align}
  \zeta_+=f(\zeta,y), \quad u = g(\zeta,y) \label{eq:nlc}
\end{align}
where $\zeta$ is the state and $f,g$ are locally Lipschitz, leading to the closed-loop dynamics:
\begin{equation}\label{eq:cl}
  \begin{split}
      x_+ &=Ax+d_x+Bg(\zeta,Cx+d_y), \\
  \zeta_+ &= f(\zeta, Cx+d_y).
  \end{split}
\end{equation}
We give the following version of the well-known universality property of the Youla parameterization.
\begin{prop}\label{prop:1}
	Consider the following control architecture \eqref{eq:qparam}, 
parameterized by $\Q$.\begin{enumerate}
  \item For any contracting and Lipschitz $\Q$, the closed-loop system with the controller \eqref{eq:qparam} is contracting and Lipschitz.
  \item Any controller of the form \eqref{eq:nlc} that achieves contracting and Lipschitz closed-loop can be written in the form \eqref{eq:qparam} with contracting and Lipschitz $\Q$.
\end{enumerate}
Moreover, the closed-loop response with this controller structure is
\begin{equation}
 z= \mathcal{T}_0d+\mathcal{T}_1\Q(\mathcal{T}_2d)\label{eq:zTQ}
\end{equation}
where $\mathcal{T}_0,\mathcal{T}_1,\mathcal{T}_2$ are stable linear systems.
\end{prop}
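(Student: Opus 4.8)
The plan is to prove the two directions separately: the forward direction (part 1) together with the response formula \eqref{eq:zTQ} will follow from a change of coordinates to the observer error, while the converse (part 2) will follow from an explicit construction of $\Q$ plus an embedding argument. For part 1, introduce the observer error $e := x-\hat x$. Using \eqref{eq:qparam} and the plant dynamics one computes $\tilde y = Ce+d_y$ and $e_+ = (A-LC)e+d_x-Ld_y$, so $e$, and hence $\tilde y$, is the output of a stable LTI system driven by $d$; call this map $\tilde y = \T_2 d$. Writing $w:=\Q(\tilde y)$ and substituting $u=-K\hat x+w=-Kx+Ke+w$ into $x_+=Ax+Bu+d_x$ gives $x_+ = (A-BK)x+BKe+Bw+d_x$, a stable LTI system (as $A-BK$ is stable) driven by $e$, $w$, $d$. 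Since $e$ is itself stable-linear in $d$, both $x$ and $u=-Kx+Ke+w$ have the form (stable linear in $d$) plus (stable linear in $w$), which is exactly \eqref{eq:zTQ} with stable linear $\T_0,\T_1$. Taking the closed-loop state as $(e,x,\eta)$ with $\eta$ the state of $\Q$, the $e$-subsystem is stable LTI, the $\eta$-subsystem is $\Q$ driven by the Lipschitz signal $\tilde y=Ce+d_y$, and the $x$-subsystem is stable LTI driven by $e,d,w=\Q(\tilde y)$; thus the closed loop is a cascade of contracting, finite-incremental-$\ell^2$-gain systems, and a standard cascade argument makes it contracting and Lipschitz.

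For part 2, given a realization \eqref{eq:nlc} achieving a contracting, Lipschitz closed loop \eqref{eq:cl}, define $\Q$ with input $\tilde y$, state $(\hat x,\zeta)$, and equations $\hat x_+ = A\hat x + Bg(\zeta,\tilde y+C\hat x)+L\tilde y$, $\zeta_+ = f(\zeta,\tilde y+C\hat x)$, $v = g(\zeta,\tilde y+C\hat x)+K\hat x$ (there is no algebraic loop since $g$ does not depend on $u$). Substituting $v$ into $u=-K\hat x+\Q(\tilde y)$ in \eqref{eq:qparam} gives $u=g(\zeta,\tilde y+C\hat x)$, and using $\tilde y=y-C\hat x$ one checks that \eqref{eq:qparam} with this $\Q$ reproduces \eqref{eq:nlc} exactly, with $\hat x$ an additional observer state whose error dynamics are stable; hence the two closed loops coincide.

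It remains to show this $\Q$ is contracting and Lipschitz, which is the main obstacle. The key idea is to embed the dynamics of $\Q$ into the (assumed stable) closed loop by choosing the disturbance so that the observer error vanishes. Given any input $\tilde y$ and any initial state $(\hat x_0,\zeta_0)$ of $\Q$, set $d_x=L\tilde y$, $d_y=\tilde y$, $x_0=\hat x_0$, and run \eqref{eq:cl} together with the observer $\hat x_+ = A\hat x + Bu + L(y-C\hat x)$. Then $e=x-\hat x$ satisfies $e_+ = (A-LC)e + d_x - Ld_y = (A-LC)e$ with $e_0=0$, so $e\equiv 0$, whence $y-C\hat x\equiv d_y=\tilde y$; the trajectory of $(\hat x,\zeta,v)$ in this augmented closed loop therefore coincides with the trajectory of $\Q$ driven by $\tilde y$. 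The map $\tilde y\mapsto(d,x_0)$ so constructed is linear, with $\|d^a-d^b\|\le(\|L\|+1)\|\tilde y^a-\tilde y^b\|$, and since $e\equiv0$ we have $\hat x=x$, so the components of $v=u+K\hat x$ are Lipschitz in $z$; consequently the contraction and finite incremental $\ell^2$ gain of $d\mapsto z$ (and forgetting of the $(x,\zeta)\leftrightarrow(\hat x,\zeta)$ initial condition) transfer to $\Q$: a common input $\tilde y$ corresponds to a common $d$ with varying $x_0$, giving contraction, and two inputs give the Lipschitz bound above. Notably this also yields finite incremental $\ell^2$ gain for $\Q$ even though $f,g$ are only locally Lipschitz, the closed-loop stability assumption supplying the regularization. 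The only point needing care is the precise bookkeeping of initial conditions in the definitions of ``contracting'' and ``Lipschitz,'' which is routine.
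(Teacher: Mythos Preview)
Your proof is correct and follows essentially the same route as the paper: the same observer-error change of coordinates to derive $\T_0,\T_1,\T_2$ and the cascade argument for Claim~1, and for Claim~2 the same explicit $\Q$ with state $(\hat x,\zeta)$ together with the embedding $d_x=L\tilde y,\ d_y=\tilde y$ (the paper phrases this as ``relabelling $x$ as $\hat x$''). Your treatment is in fact a bit more careful than the paper's on two points---you verify explicitly that $e\equiv 0$ via $e_+=(A-LC)e$, and you note the duplicate-$\hat x$ issue (the observer in \eqref{eq:qparam} versus the copy inside $\Q$), correctly observing that their difference evolves as $(A-BK)\delta$---but these are refinements of the same argument, not a different approach.
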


\noindent\textbf{Proof:} 
We first construct $\mathcal{T}_0,\mathcal{T}_1,\mathcal{T}_2$, as per \cite{zhouRobustOptimalControl1996}. Defining  $\tilde x = x-\hat x$, the closed-loop dynamics under the base controller \eqref{eq:base} can be written as
\begin{equation}\label{eq:lqg_rewrite}
    \begin{split}
        \tilde x_+ &= (A-LC)\tilde x+d_x-Ld_y,\\
x_+  &=(A-BK)x+d_x+BK\tilde x,\\
  \tilde y &= C\tilde x, \quad
  z = \begin{bmatrix}
      x^T,& (K(\tilde{x}-x))^T
  \end{bmatrix}^T.
    \end{split}
\end{equation}
The first, second, and fourth equations define the stable closed-loop response with the base controller, $d\mapsto z$, which we denote $\mathcal{T}_0$. The first and third define a stable linear system $d\mapsto \tilde y$, which we denote $\mathcal{T}_2$. To construct  $\mathcal{T}_1$, we introduce a virtual control input: 
$
\tilde u :=u-u_K = u+K\hat x=u+K(x-\tilde x) 
$
Now, the plant dynamics under the Youla controller \eqref{eq:qparam} can be rewritten as a linear system:
\begin{align}
  x_+ 
  & = (A-BK)x+d_x+BK\tilde x + B\tilde u. \label{eq:xv}
\end{align}
So by superposition, we have
$
 z= \mathcal{T}_0{d}+\mathcal{T}_1\tilde u
$
where $\mathcal{T}_1$ is the stable system
\begin{align}
 \xi_+ &= (A-BK)\xi+B\tilde u,\quad
 \eta= [\xi^T,\  (\tilde u-K\xi)^T]^T.\notag 
\end{align}
Hence if $\Q$ maps $\tilde y$ to $\tilde u$, we have the closed loop \eqref{eq:zTQ}.

\textit{Proof of Claim 1.}
This follows from the stability of $\mathcal{T}_0,\mathcal{T}_1,\mathcal{T}_2$ and the contraction and Lipschitz condition on $\Q$, and the composition properties of contracting systems and Lipschitz mappings.

\textit{Proof of Claim 2.} Assuming a  controller $u=\K(y)$ exists in the form \eqref{eq:nlc} we can equivalently augment it with the state estimator and rewrite it in terms of $\tilde y$ and $\tilde u$ as follows:
\begin{align}
\hat x_+ &=A\hat x+L\tilde y+Bg(\phi,  C\hat x+ \tilde y)\label{eq:QK1}\\
  \phi_+&= f(\phi, C\hat x+ \tilde y)\label{eq:QK2} \\
  \tilde u &=  K\hat x+u=K\hat x+g(\phi,C\hat x+\tilde y)  \label{eq:QK3}
\end{align}
defining the closed-loop mapping $\Q_\K:\tilde y\mapsto \tilde u$. 

Now, taking \eqref{eq:cl}, relabelling $x$ as $\hat x$, and taking $d_x=L\tilde y$ and $d_y=\tilde y$ as particular inputs, gives \eqref{eq:QK1}, \eqref{eq:QK2}. Hence if the closed-loop with $\K$ is contracting then so is $\Q_\K$. Moreover, the fact that the mapping from $d\to z$ is Lipschitz  with  \eqref{eq:cl} implies that the mapping $\tilde y \to v$ is too, since by \eqref{eq:QK3} $v= [K, I]z$ under this relabelling so $|\tilde u|\le \alpha |z|$ for some $\alpha$. Hence the closed-loop system is contracting and Lipschitz if and only if $\Q_\K$ is.\hfill$\Box$

\subsection{$\Q$ parameterization via RENs}
In this paper, we parameterize $\Q$ via \textit{recurrent equilibrium networks} (RENs), a model architecture introduced in \cite{revay2021recurrent}. We will show that a REN is an universal approximator for the contracting and Lipschitz Youla parameter. 

We first review the REN model $\tilde \Q:\tilde y \mapsto \tilde u $, which is itself a feedback interconnection of a linear system and nonlinear ``activation functions'' $\sigma$:
\begin{equation}\label{eq:ren}
   \tilde \Q
    \begin{cases}
        \begin{bmatrix}
            \chi_{t+1} \\ v_t \\ \tilde u_t
        \end{bmatrix}=
        \overset{W}{\overbrace{
		\left[
            \begin{array}{c|cc}
            A_\chi & B_1 & B_2 \\ \hline 
            C_{1} & D_{11} & D_{12} \\
            C_{2} & D_{21} & D_{22}
		\end{array} 
		\right]
        }}
        \begin{bmatrix}
            \chi_t \\ w_t \\ \tilde y_t
        \end{bmatrix}+
        \overset{b}{\overbrace{
            \begin{bmatrix}
                b_\chi \\ b_v \\ b_y
            \end{bmatrix}
        }},\vspace{10pt}\\
        w_t=\sigma(v_t):=
        \begin{bmatrix}
            \sigma(v_{t}^1) & \sigma(v_{t}^2) & \cdots & \sigma(v_{t}^q)
        \end{bmatrix}^\top
    \end{cases}
\end{equation} 
where $ \chi_t {\in \R^{n_\chi}}$ is the internal state, $ v_t,w_t\in \R^{n_v}$ are the input and output of the neuron layer. We assume that $ \sigma:\R\rightarrow\R$ is { a non-polynomial function with} slope restricted in $[0,1]$. When $D_{11}\ne 0$, the mappting $v\mapsto w$ defines an \textit{equilibrium network} a.k.a. \textit{implicit network}, although when $D_{11}$ is strictly lower-triangular this mapping is explicit. { Many feed-forward networks (e.g. multilayer perceptron, residual networks) can be represented as equilibrium networks \cite{revay2021recurrent-full}. }  

The learnable parameters in \eqref{eq:ren} are $(W, b)$, but the key feature of RENs relevant to this paper is that they admit a \textit{direct} parameterization, i.e. there is a smooth mapping $\theta\mapsto (W, b)$ from an unconstrained parameter $\theta \in \mathbb R^q$, such that for all $\theta$ the resulting REN is contracting and Lipschitz. A further benefit is that they are very flexible, including many previously-used models as special cases. The following proposition shows that {RENs are universal approximators for contracting and Lipschitz systems.}
\begin{prop}
    { For any contracting and Lipschitz $\Q:\tilde y\mapsto \tilde u$, and any $M, \epsilon>0$ there exists a sufficiently large $n_\chi$ and $n_v$ such that a REN $\tilde{\Q}$ exists with these dimensions and $\| \tilde{\Q}(\tilde y)-\Q(\tilde y)\|_\infty \leq \epsilon$ for all $\|\tilde y\|_\infty \leq M$.
    }\label{pr:universal}
\end{prop}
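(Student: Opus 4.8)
The plan is to proceed in three steps: (i) replace $\Q$ by a \emph{finite-memory} operator $\Q_N$ that agrees with $\Q$ up to an error exponentially small in the memory length $N$, using the contraction hypothesis; (ii) note that $\Q_N$ acts at each instant as a \emph{single} continuous function of the last $N+1$ input samples, restricted to a compact cube, and invoke the classical universal approximation theorem for single-hidden-layer networks with non-polynomial activation; and (iii) realise that network, together with a tapped-delay line, explicitly in the REN form \eqref{eq:ren}, and verify that this particular $(W,b)$ is contracting and Lipschitz. The bound then follows from
\[
\|\tilde\Q(\tilde y)-\Q(\tilde y)\|_\infty \le \|\tilde\Q(\tilde y)-\Q_N(\tilde y)\|_\infty + \|\Q_N(\tilde y)-\Q(\tilde y)\|_\infty ,
\]
with the first term $\le\epsilon/2$ from step (ii) and the second $\le\epsilon/2$ from step (i), for all $\|\tilde y\|_\infty\le M$.

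\textbf{Step (i).} Let $\zeta_+=f(\zeta,\tilde y)$, $\tilde u=h(\zeta,\tilde y)$ with locally Lipschitz $f,h$ and $\zeta_0=0$ be a (finite-dimensional) realisation of $\Q$. Since $f(\cdot,0)$ is a contraction it has a unique fixed point, so after a coordinate shift we may assume $f(0,0)=0$ without changing the operator $\tilde y\mapsto\tilde u$. Contraction gives $|\zeta_t^1-\zeta_t^2|\le a\rho^t|\zeta_0^1-\zeta_0^2|$ for some $a\ge 1$, $\rho\in[0,1)$, and comparing the trajectory from $\zeta_0=0$ with the zero trajectory (a geometric-series estimate, using that each $\tilde y_t$ lies in the compact set $\|\tilde y_t\|\le M$) yields a uniform-in-$t$ bound $|\zeta_t|\le\beta(M)$. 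Define $\Q_N(\tilde y)_t$ by running the realisation from zero state over the window $[\max(0,t-N),t]$, padding with zeros when $t<N$: for $t<N$ this equals $\Q(\tilde y)_t$ exactly, since both start from $\zeta_0=0$ and, as $f(0,0)=0$, the leading zeros do not move the state; for $t\ge N$, contraction together with local Lipschitzness of $h$ gives $|\Q_N(\tilde y)_t-\Q(\tilde y)_t|\le L_h\,a\,\rho^{N}\beta(M)$. Choose $N$ so that $L_h a\rho^N\beta(M)\le\epsilon/2$.

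\textbf{Steps (ii)--(iii).} By construction $\Q_N(\tilde y)_t=\Phi_N(\tilde y_{t-N},\dots,\tilde y_t)$ for a \emph{single} map $\Phi_N:\R^{(N+1)n_y}\to\R^{n_u}$ (with $n_y,n_u$ the dimensions of $\tilde y,\tilde u$ and $\tilde y_s:=0$ for $s<0$), which is continuous as a composition of finitely many locally Lipschitz maps, hence uniformly continuous on $[-M,M]^{(N+1)n_y}$. Since $\sigma$ is $1$-Lipschitz (being slope-restricted in $[0,1]$) and therefore continuous, and is non-polynomial by assumption, the classical universal approximation theorem supplies a single-hidden-layer network $\mathcal N(x)=D_{21}\,\sigma(\Theta x+b_v)+b_y$ of some width $n_v$ with $\sup_{\|x\|_\infty\le M}|\mathcal N(x)-\Phi_N(x)|\le\epsilon/2$. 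Now take $\chi$ to be a depth-$N$ tapped-delay line of $\tilde y$: $A_\chi$ the nilpotent block-shift, $B_2$ injecting $\tilde y_t$, $B_1=0$, $b_\chi=0$, so that $[\chi_t^\top,\tilde y_t^\top]^\top$ collects the last $N+1$ samples; choose $C_1,D_{12}$ so that $C_1\chi_t+D_{12}\tilde y_t+b_v$ reproduces the input layer $\Theta[\chi_t;\tilde y_t]+b_v$ of $\mathcal N$, and set $D_{11}=0$, $C_2=0$, $D_{22}=0$, keeping $D_{21},b_v,b_y$. With this $(W,b)$ the state obeys the linear, nilpotent — hence Schur, hence contracting — recursion $\chi_+=A_\chi\chi+B_2\tilde y$, decoupled from the neuron feedback since $B_1=0$; and the overall map is a stable linear system cascaded with a fixed-weight network whose activation is $1$-Lipschitz, hence globally Lipschitz, i.e. of finite incremental $\ell^2$ gain. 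Thus $\tilde\Q$ is a contracting and Lipschitz REN of the form \eqref{eq:ren} with $\tilde\Q(\tilde y)_t=\mathcal N(\tilde y_{t-N},\dots,\tilde y_t)$, so $\|\tilde\Q(\tilde y)-\Q_N(\tilde y)\|_\infty\le\epsilon/2$ on $\|\tilde y\|_\infty\le M$. (Only existence is needed; since the direct parameterisation of \cite{revay2021recurrent-full} is complete for contracting/Lipschitz RENs, this $\tilde\Q$ is attained by some $\theta$.)

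\textbf{Main obstacle.} Steps (ii)--(iii) are routine: a black-box appeal to feed-forward universality plus linear-algebra bookkeeping for the delay line. The crux is step (i): making the finite-memory approximation \emph{uniform in time}, which rests on the uniform state bound $\beta(M)$ under bounded inputs and on correctly handling the transient $t<N$ — it is precisely to make $\Q_N$ (and hence $\tilde\Q$) coincide with $\Q$, not merely approximately, before the delay line fills up that we need the normalisation $f(0,0)=0$ together with the initialisation $\chi_0=0$. One must also check that $\Phi_N$ is genuinely one $t$-independent function for $t\ge N$, which uses time-invariance of the realisation.
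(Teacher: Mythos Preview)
Your proof is correct and follows the same high-level route as the paper --- reduce $\Q$ to a finite-memory (NLMA) operator, approximate that by a feed-forward network, and realise the result as a REN via a tapped-delay line --- but the packaging differs in a way worth noting. The paper first shows that contraction together with the incremental $\ell^2$-gain bound imply the \emph{fading-memory} property of Boyd--Chua, then invokes \cite[Thm.~3]{boyd1985fading} as a black box for the NLMA approximation, and finally appeals to \cite{revay2021recurrent-full} for the fact that RENs subsume NLMAs with equilibrium-network readout. You instead construct the truncation $\Q_N$ directly from the contraction estimate, approximate $\Phi_N$ via Leshno-type universal approximation, and write down the tapped-delay REN explicitly. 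Two remarks: (i) the sentence justifying $|\zeta_t|\le\beta(M)$ by ``comparing to the zero trajectory'' is not literally a contraction comparison since the inputs differ; the clean argument iterates the $N$-step contraction bound against a one-window local-Lipschitz estimate on $f$, which does yield the geometric series you claim. (ii) Your step~(i) uses only contraction plus local Lipschitzness of $f,h$, whereas the paper's fading-memory derivation uses both the contraction Lyapunov function $V_1$ and the dissipation Lyapunov function $V_2$; in your route the operator-Lipschitz hypothesis on $\Q$ is in fact never invoked, since the constructed $\tilde\Q$ is Lipschitz by its own NLMA structure, so your argument is slightly more economical.
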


\begin{proof}
We first show that {$\Q$ has} fading memory in the sense of \cite{boyd1985fading}. { Assume that $\Q$ has the state-space representation $x_+=f(x,\tilde y),\, \tilde u=g(x,\tilde y)$} where $f,g$ are locally Lipschitz. Given an initial state $a$ and input sequence $\tilde y$, the corresponding state and output at time $t$ are denoted by $x_t^{a,\tilde y}$ and $\tilde u_t^{a,\tilde y}$, respectively. Since $\Q$ is contracting, there exists an incremental Lyapunov function $c_1|x_1-x_2|^2\leq V_1(x_1,x_2)\leq c_2|x_1-x_2|^2$ with $c_2\geq c_1>0$ such that for any initial state $a, b$ and input {$\|\tilde y^1\|_\infty \leq M$, we have}
$
    V_1\bigl(x_{t+1}^{a,\tilde y^1},x_{t+1}^{b,\tilde y^1}\bigr)\leq \alpha V_1\bigl(x_t^{a,\tilde y^1},x_t^{b,\tilde y^1}\bigr)
$
for some $\alpha \in (0,1)$. By applying the above inequality $N>\log(c_1/c_2)/\log\alpha$ steps from $t=0$, we have
\begin{equation}\label{eq:contraction}
    \left|x_{N}^{a,\tilde y^1}-x_{N}^{b,\tilde y^1}\right|\leq \beta \left|x_0^{a,\tilde y^1}-x_0^{b,\tilde y^1}\right|
\end{equation}
where $\beta = \alpha^{N/2}\sqrt{c_2/c_1}\in (0,1)$. Due to the Lipschitzness of $\Q$, there exists an incremental Lyapunov function $c_3|x_1-x_2|^2\leq V_2(x_1,x_2)\leq c_4|x_1-x_2|^2$ with $c_4\geq c_3>0$ such that for any initial state $b$ and inputs {$\|\tilde y^1\|_\infty,\|\tilde y^2\|_\infty\leq M$,}
\[
   \begin{split}
       { V_{2,t+1}-V_{2,t}}\leq \gamma^2|\tilde y_t^1-\tilde y_t^2|^2-\left|\tilde u_t^{b,\tilde y^1}-\tilde u_t^{b,\tilde y^2}\right|^2
   \end{split}
\]
for some $\gamma > 0$, {where $V_{2,t}=V_2\left(x_t^{b,\tilde y^1},x_t^{b,\tilde y^2}\right)$}.
Summation of the above inequality from $t=0$ to $t=N$ yields
\begin{equation}\label{eq:lipschitz}
        \left|x_{N}^{b,\tilde y^1}-x_{N}^{b,\tilde y^2}\right| \leq \sqrt{\frac{V_{2,N}}{c_3}} \leq
         \frac{\gamma}{\sqrt{c_3}}\sum_{t=0}^{N-1}|\tilde y_t^1-\tilde y_t^2|
\end{equation}
From \eqref{eq:contraction} and \eqref{eq:lipschitz} we have
$
    |x_N^{a,\tilde y^1}-x_N^{b,\tilde y^2}|\leq \beta |x_0^{a,\tilde y^1}-x_0^{b,\tilde y^1}|+\frac{\gamma}{\sqrt{c_3}}\sum_{t=0}^{N-1}|\tilde y_t^1-\tilde y_t^2|.
$
Shifting the initial time point to $t_0=-kN$ with $k\rightarrow\infty$ yields
\[
    \begin{split}
        \left|x_0^{a,\tilde y^1}-x_0^{b,\tilde y^2}\right|\leq 
        & \lim_{k\rightarrow\infty}\frac{\gamma}{\sqrt{c_3}}\sum_{j=0}^{k-1}\beta^{j}\sum_{t=-(j+1)N+1}^{-jN}|\tilde y_t^1-\tilde y_t^2| \\
        = &\lim_{k\rightarrow\infty}\frac{\gamma}{\sqrt{c_3}}\sum_{j=0}^{k-1}\beta_1^{j}\sum_{t=-(j+1)N+1}^{-jN}\beta_2^j|\tilde y_t^1-\tilde y_t^2| \\
        \leq & \frac{N\gamma}{(1-\beta_1)\sqrt{c_3}}\sup_{t\leq 0} \beta_2^{-\left\lfloor\frac{t}{N}\right\rfloor}|\tilde y_t^1-\tilde y_t^2|
    \end{split}
\]
where $\beta_1,\beta_2\in (0,1)$ and $\beta_1\beta_2=\beta$. Furthermore, we have
\begin{equation}\label{eq:fade-mem}
    \left|\tilde u_0^{a,\tilde y^1}-\tilde u_0^{b,\tilde y^2}\right|\leq \kappa \sup_{t\leq 0} \beta_2^{-\left\lfloor\frac{t}{N}\right\rfloor}|\tilde y_t^1-\tilde y_t^2|
\end{equation}
where $\kappa =L_2+\frac{L_1 N\gamma}{(1-\beta_1)\sqrt{c_3}}$ with $L_1,L_2$ as the local Lipschitz constants of $g$ with respect to $x$ and $\tilde y$, respectively. 

Eq.~\eqref{eq:fade-mem} implies that $ \G $ has the fading memory property, { and thus it} can be universally approximated by nonlinear moving averaged operators (NLMAs) \cite[Thm.~3]{boyd1985fading}. Since REN includes the NLMA structure { with equilibrium network as output mapping \cite[Sec. VI]{revay2021recurrent-full}, $\tilde{\Q}$ can approximate $\Q$ arbitrarily close as $n_\chi$ and $n_v$ increase. }
\end{proof}

\section{Simulation Experiments on Cart-Pole}\label{sec:experiment}

In this section, we compare empirical performance of three control structures for an open-loop unstable system (the linearized cart-pole system): 
\begin{itemize}
    \item Youla parameterisation \eqref{eq:qparam} with a learnable $\Q$-parameter;
    \item ``Feedback policy" wraps a learnable dynamic controller as an outer loop around the base controller \eqref{eq:base}, i.e. $u = -K\hat x+\K_\theta(y)$.
    \item Projection policy \eqref{eq:nlc} parameterized by an RNN \cite{gu2021recurrent}. 
\end{itemize}
For the learnable parameter in the first two structures we considered both dynamical models (REN and LSTM) and static mappings (LBEN and DNN). A Lipschitz-bounded equilibrium network (LBEN) is a memoryless version of the REN \cite{revay2021recurrent-full}, i.e., $n_\chi= 0$ in \eqref{eq:ren}. Each controller is denoted by \textit{structure-model}, e.g., the proposed approach is called Youla-REN. Note that neither the Youla-LSTM nor any of the ``Feedback'' policies provide closed-loop stability guarantees. All models were coded in Julia except the Project-RNN, for which we used Python sub-routines from \cite{gu2021recurrent} to perform model projection. 

\subsection{Problem Setup}

We consider a linearized cartpole system \cite{gu2021recurrent} with 
\[
\begin{split}
    A&=\begin{bmatrix}
		1 & \delta & 0 & 0 \\
		0 & 1 & -\frac{ m_p g\delta}{m_c} & 0\\
		0 & 0 & 1 & \delta\\
		0 & 0 & \frac{(m_c+m_p)g\delta}{l m_c} & 1
	\end{bmatrix},\;
B=\begin{bmatrix}
		0 \\ \frac{\delta}{m_c} \\ 0 \\ -\frac{\delta}{m_c}
	\end{bmatrix}, \\
C&=\begin{bmatrix}
	    1 & 0 & 0 & 0 \\ 0 & 0 & 1 & 0
	\end{bmatrix},
\end{split}	
\]
where $ m_p=0.2$, $m_c= 1.0$, $ l=0.5$, $g=9.81$ and $\delta=0.08$. We used an infinite-horizon LQG controller with random weight and covariance matrices as the base controller \eqref{eq:base}. We examined the following two control tasks. 

\paragraph{Task 1} Consider the classic LQG control problem:
\begin{equation}
	J_q=E\left[\|x_T\|_{Q_f}^2+\sum_{t=0}^{T-1}\bigl(\|x_t\|_Q^2 + \|u_t\|_R^2\bigr)\right]
\end{equation}
where $x_t,u_t$ are the state and input at time $t$, respectively. The weight matrices are given by $Q_f=Q=\mathrm{diag}(1,1,5,1)$ and $R=1$. The covariance matrices of $d_x$ and $d_y$ are chosen as $\Sigma_x=0.005I$ and $\Sigma_y=0.001I$, respectively. We used a trajectory length of $T=50$.

\paragraph{Task 2} We also considered the disturbance rejection problem with a convex but non-quadratic cost function
\begin{equation}
    J_c= E\left[ J_q+ \rho\sum_{t=0}^{T-1} \max(|u_t|-\overline{u}, 0)\right]
\end{equation}
where $ \overline{u}$ is the soft input bound and $\rho$ is the penalty coefficient. The optimal policy for this problem is nonlinear due to the input constraint. We chose a large $\rho=400$ such that $u_t$ could only exceed the bound $\overline{u}=2$ for short periods of time. This is relevant e.g. for electric motors which allow larger currents for short periods before reverting to their continuous current capacity. The input disturbances contain Gaussian noise and piecewise-constant perturbation with random duration and magnitude. The trajectory length chosen was $T=100$.

\subsection{Training details}

We chose models with similar numbers of parameters, corresponding to 10 states and 20 neurons for both REN and RNN, 10 cell units for LSTM, and 40 hidden neurons for LBEN and DNN, respectively. We used {\tt ReLU} activation functions for all models except LSTM. Each model was trained with the ADAM optimizer \cite{Kingma2015adam} over a range of learning rates from $0.001$ to $0.1$. Each experiment was repeated for 10 times with different random seeds, which affect model initialisation, noise, and disturbances. The learning rate was reduced by a factor of 10 after 85\% of the total training epochs. Gradients were clipped to an $\ell^2$ norm of 10.0 over all parameters. Depending on the learning scenarios, the policy gradients were computed in the following two ways.

\paragraph{Exact Gradients}
In the model-based case, we used the exact gradients, i.e., first-order oracle $\nabla_\theta \hat{J}_\theta$ with batch size of 40. Our chosen learning rates are provided in Table~\ref{tab:ars-stepsize}. We chose a learning rate of 0.1 for the Projection-RNN controller after finding its performance was insensitive to hyperparameter choices. 

\begin{table}[!bt]
    \centering
    \caption{Learning rates used for model training}
    \label{tab:ars-stepsize}
    \begin{tabular}{l|DDDD|DDDD}
         & \multicolumn{4}{c|}{LQG} & \multicolumn{4}{c}{Input constrained} \\
          \hline
         Exact grad. & DNN & LBEN & LSTM & REN & DNN & LBEN & LSTM & REN \\
         \hline
    Youla  & 0.01  & 0.01  & 0.1 & 0.01 & 0.01 & 0.01 & 0.1 & 0.01\\
    Feedback & 0.01  & 0.001  & 0.1 & 0.001 & 0.01 & 0.001 & 0.1 & 0.001\\
    \hline\hline
    Appr. grad. & DNN & LBEN & LSTM & REN & DNN & LBEN & LSTM & REN \\
         \hline
    Youla  & 0.01  & 0.01  & 0.04 & 0.01 & 0.01 & 0.02 & 0.04 & 0.02\\
    Feedback & 0.01  & 0.008 & 0.04 & 0.008 & 0.01 & 0.01 & 0.02 & 0.01 \\
    \end{tabular}
\end{table}

\paragraph{Approximate Gradients}

In the reinforcement learning case, only the zero-order oracle $\hat{J}_\theta$ is available. We approximated the policy gradients using a finite-difference approach called Augmented Random Search (ARS) \cite{Mania2018advances}. We made three minor changes to the original ARS algorithm to improve  convergence. Firstly, we used the ADAM optimiser and gradient clipping instead of free stochastic gradient descent. Secondly, we used the average cost over $b$ batches of initial conditions to approximate the expected cost. Thirdly, we used the same batch of initial conditions when evaluating the cost for perturbations of the model parameters for a fair comparison between policy rollouts. The estimated policy gradient with respect to the model parameters $\theta$ is then
\begin{equation}
	\nabla_\theta \hat{J}_\theta = \frac{1}{m} \sum_{i=1}^m \frac{\hat{J}_{\theta + \nu \delta_i}(\mathbf{x_{0}^i}) - \hat{J}_{\theta - \nu \delta_i}(\mathbf{x_{0}^i})}{2 \nu \sigma_R}\delta_i
\end{equation}
with $\hat{J}_\theta (\mathbf{x_0}) = \frac{1}{b} \sum_{j=1}^{b} J_\theta (x_{0}^j)$, 
where $x_{0}^j$ are initial state vectors, $\nu$ is a small perturbation, $\delta_i$ is a normally-distributed random vector with zero mean, and $\sigma_R$ is the standard deviation of the $2m$ average costs collected for each gradient approximation \cite{Mania2018advances}. We chose $(m, b, \nu) = (40, 10, 0.01)$. The chosen learning rates are provided in Table~\ref{tab:ars-stepsize}. It is not obvious how to effectively combine ARS with the iterative SDP projection step required by the Projection-RNN model in \cite{gu2021recurrent}. We therefore excluded it from our experiments.

\subsection{Results} \label{sec:Results}

\paragraph{Exact Gradients}
Fig.~\ref{fig:result} shows the training results of different controllers. Our Youla-REN performed the best overall, reaching the lowest final cost in the fewest training epochs. The Project-RNN controller seemed to fall into local minima after the first projection, making its cost more sensitive to model initialisation. The Youla-REN and Youla-LSTM controllers achieved similar final costs, but the Youla-REN converged faster and also guarantees stability during learning. Both Youla-LBEN and Youla-REN showed a rapid initial cost decrease, while the latter reached a lower final cost due to its memory units. All Youla policies outperformed their corresponding Feedback policies with the same model.

In Table~\ref{tab:computation} we compare the per-epoch computational cost of each method as a function of the number of neurons in the control policy. The proposed Youla-REN method required around two orders of magnitude less computation than the method of \cite{gu2021recurrent} which solves a semidefinite program at each gradient iteration. Computation time comparison with other models is omitted as their differences were negligable.

\begin{figure*}[t]
    \centering
    \includegraphics[width=0.96\textwidth]{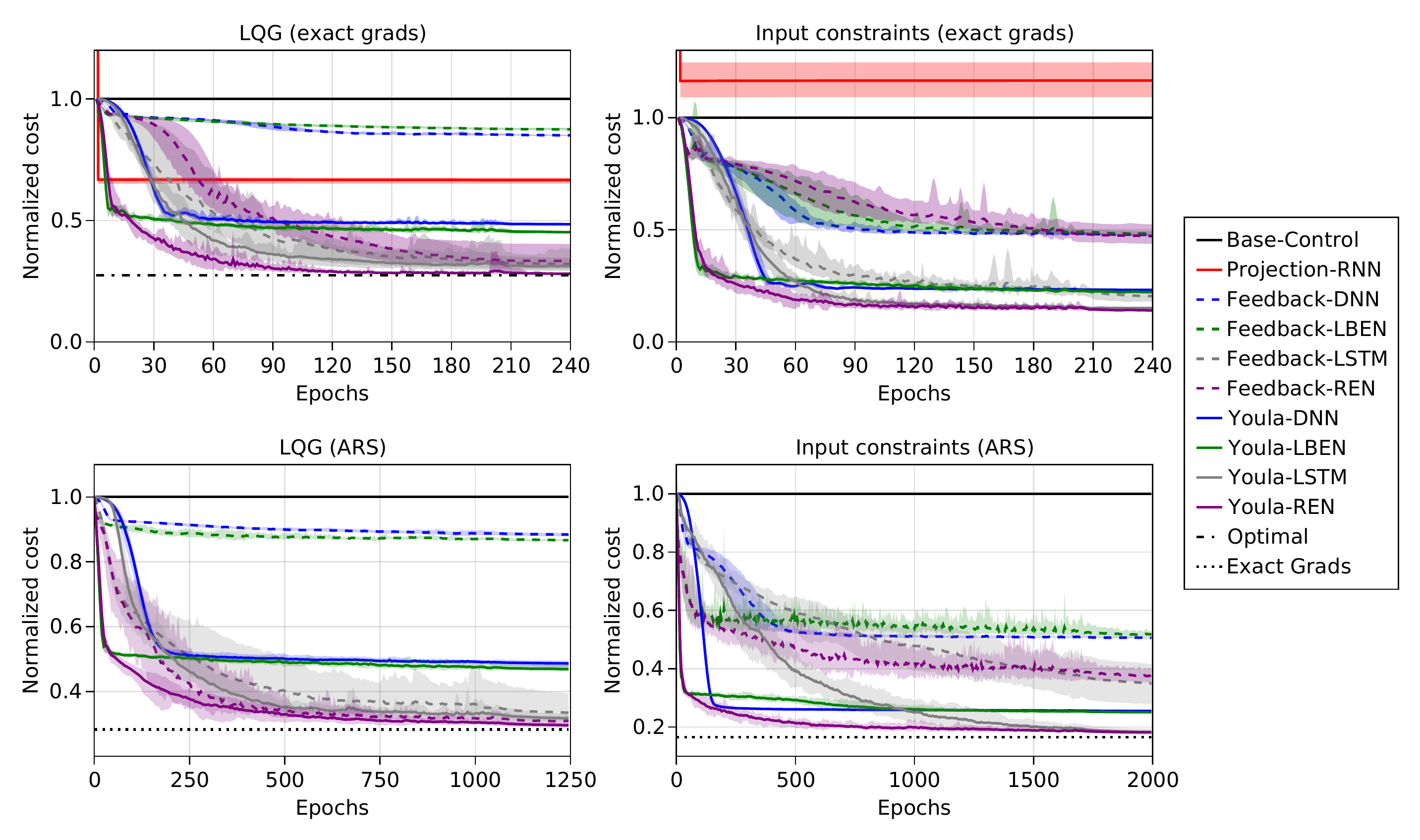}
    \caption{Normalized test cost vs epochs while learning the controllers with exact gradients and ARS. The mean cost computed over 10 random seeds is plotted, with colored bands showing the range. The base linear policy is an infinite-horizon LQG controller with random weight and covariance matrices. The optimal policy is the finite-horizon LQG controller with respect to the desired weight and covariance matrices. The final Youla-REN cost computed with exact gradients is also shown in the ARS plots for comparison (black dotted line). }\label{fig:result}
\end{figure*}
  
\begin{table}[!bt]
    \centering
    \caption{Computation time per epoch in seconds}
    \label{tab:computation}
    \begin{tabular}{c|ccccc}
        Number of neurons & 20 & 50 & 80 & 110 & 140 \\
        \hline
        Projection method \cite{gu2021recurrent} & 0.64 & 2.98 & 8.43 & 18.77 & 36.76\\
        Proposed method & 0.03 & 0.05 & 0.07 & 0.11 & 0.18
    \end{tabular}
\end{table}

\paragraph{Approximate Gradients}

Fig.~\ref{fig:result} also shows the results of learning controllers with ARS, where we can draw similar conclusions to the exact gradient experiments. The models achieved similar final costs to those trained with exact gradients in both tasks, even when limited to approximate gradients derived from policy evaluations at each iteration. Most notably, our Youla-REN controller still outperformed all other models.  

The ease of implementing approximate gradients with Youla-REN while still maintaining stability guarantees during learning is a direct consequence of our unconstrained parameterization. Each perturbation of the Youla-REN policy in ARS is intrinsically guaranteed to be stabilising, allowing ARS to take arbitrary perturbations in the policy parameters and remain within the set of stabilising controllers. This is a key feature of our approach, and provides a significant advantage over models like LSTM, which are not contracting for all parameters, or the projection method of \cite{gu2021recurrent}, which cannot be easily adapted to random search. It is worth noting that having stability guarantees during random search makes the Youla-REN architecture potentially suitable to learning on hardware in safety-critical systems, where trialling destabilizing policies during the training process must be avoided.

\bibliographystyle{IEEEtran}
\bibliography{refs}

\end{document}